\theoremstyle{definition}
\newtheorem{remark}{Remark}
\theoremstyle{plain}
\newtheorem{theorem}{Theorem}
\DeclareMathOperator{\Tr}{Tr} 
\renewcommand*{\geq}{\geqslant}
\numberwithin{equation}{section}
\begin{document}

\title{\Large\textbf{Derivation of the Redfield quantum master equation and corrections to it by the Bogoliubov method}}
\author{Anton Trushechkin\footnote{e-mail: trushechkin@mi-ras.ru}}
\date{\textit{\small Steklov Mathematical Institute of Russian Academy of Sciences, Gubkina St. 8, 119991 Moscow, Russia}}

\maketitle

\begin{abstract}
Following the ideas N. N. Bogoliubov used to derive the classical and quantum nonlinear kinetic equations, we give an alternative derivation of the Redfield quantum linear master equation, which is widely used in the theory of open quantum systems, as well as higher-order corrections to it. This derivation naturally considers initially correlated system-reservoir states arising from the previous system-reservoir dynamics. It turns out that the Redfield equation does not require any modifications in this case. The expressions of higher-order corrections are simpler than those obtained by other methods.
\end{abstract}

\section{Introduction}

The theory of open quantum systems deals with quantum systems interacting with the environment (reservoir). The state of an isolated finite-dimensional quantum system is defined by the density operator, i.e., a positive Hermitian operator $\rho$ with unit trace \cite{Holevo, OhyaVol}. The dynamics of the density operator of an isolated quantum system is described by the von Neumann equation (quantum Liouville equation). If the system interacts with the environment, it is generally impossible to write a closed differential equation for the density operator of the system. However, in some limiting cases, such as the limit of weak coupling between the system and reservoir, the limit of low density of particles in the reservoir, and some others, one can derive closed linear differential equations for the density operator of the system, i.e., quantum master equations \cite{Redfield,BP}; moreover, this can be done in a mathematically rigorous way \cite{Davies,AccLuVol,Huelga,MerkliRev}. There are several systematic methods for deriving quantum master equations: the Zwanzig projection operator method \cite{Davies,Zwanzig,Huelga}, the quantum stochastic limit method \cite{AccLuVol}, and the resonance theory \cite{MerkliRev}.

In this study, we try to apply the method that N. N. Bogoliubov used when deriving the classical and quantum nonlinear kinetic equations for a gas (the Boltzmann, Landau, and Vlasov equations and their quantum analogs)  \cite{Bogol46,BogolGurov} (see also a recent review \cite{Geras}) to the derivation of quantum linear master equations. Studying the weak-coupling regime between the system and reservoir, we will provide an alternative derivation of the well-known Redfield equation \cite{Redfield,BP} and higher-order corrections to it with respect to the coupling constant between the system and reservoir. 

The underlying perturbation theory is developed from the von Neumann equation directly, rather then from the formally exact Nakajima--Zwanzig master equation, and adopts the concept of a kinetic state and the kinetic stage of evolution: During the initial fast relaxation process, the joint system-reservoir state approaches a kinetic (adjusted) state, i.e., a state form a certain class where the joint system-reservoir state depends on time only through the reduces state of the system and, thus, is completely determined by it. In the zeroth order, a kinetic state corresponds to a product (uncorrelated) system-reservoir state, but higher orders give correlations: Obviously, the states developing from the interacting system-reservoir dynamics are correlated. The kinetic stage of evolution corresponds to evolution of the kinetic states.

If the initial state is not a product state (assumed in the common derivations of quantum master equations), but a correlated kinetic state originated from the previous dynamics, then there is no initial relaxation process and we have the kinetic evolution from the very beginning. Thus, the correlated initial states arising from the previous dynamics are naturally considered in the presented approach and, moreover, simplify the analysis in comparison with the product states. The assumption of a product initial state is often criticized to be unrealistic in many cases \cite{MerkliCorr,VacchiniCorr,Pechukas,Tasaki,GVF}  (see also a discussion in Remark~\ref{Rem} below) precisely because of the previous system-reservoir dynamics. 

Recently (after a journal publication of this paper), new mathematical tools to take arbitrary initial correlations into account in the formalism of master equations were proposed \cite{MerkliCorr,VacchiniCorr}. Namely, in Ref.~\cite{MerkliCorr}, correction terms caused by initial correlations are rigorously estimated, while the method of Ref.~\cite{VacchiniCorr} uses a decomposition of an arbitrary system-reservoir state into a sum of products of system operators and reservoir states (proposed in Ref.~\cite{Hall}). 

However, correlations caused by the previous system-reservoir dynamics are not arbitrary, but very special. So, if the initial correlations are of this particular form, then the presented formalism is simpler since such initial correlations are even more natural here than the uncorrelated states and require no corrections. 

In particular, it turns out that the Redfield equation does not require any modifications in the case of a kinetic initial state. Moreover, it is better suited for precisely such kind of initial conditions: Well-known problem of possible violation of positivity on the initial short times is caused by a fast non-Markovian relaxation process of the initial product state to a correlated kinetic (adjusted) system-reservoir state \cite{Slip}. So, we have no violation of positivity if we start with a kinetic initial state.

Corrections to the Redfield equation have been studied in a number of papers \cite{4thorder,TrushHigh,BP,RichterMuka}. However, in these papers, corrections to the Redfield master equation explicitly depend on time; i.e., the corresponding differential equations are nonautonomous. The time dependence of the corrections reflects the aforementioned fast relaxation processes of the initial product state to a kinetic system-bath state in a short initial time interval. If we start with a kinetic system-reservoir state, then there is no initial relaxation process and, as a consequence, the quantum master equation turns out to be an autonomous differential equation in all orders and has a simpler form. 

If, nevertheless, the initial state is exactly a product state, then the presented formalism also can be used in combination with the usual time-dependent perturbation theory, which allows one to calculate the dynamics on initial short times. Thus, the general philosophy in this case is to use the time-dependent perturbation theory (or another method) for an initial short time period and then to use a simpler master equation for the kinetic stage of the dynamics, rather then to use a more complicated master equation describing both the initial fast relaxation to a kinetic state and the kinetic stage of the dynamics.

Also note that the perturbation theory based on the averaging method \cite{GVF,BMP,FLP} also separates a Markovian master equation for the kinetic stage and non-Markovian short-term corrections due to the initial fast relaxation. It also starts directly with the von Neumann equation bypassing the Nakajima--Zwanzig master equation. Also, in Ref.~\cite{GVF}, another special type of initial correlations is considered: namely, equilibrium system-reservoir state with external field switching-off at the zero time instant. But the concept of a kinetic state was not formulated within this method.

Since a kinetic system-reservoir state is completely determined by the reduced state of the system, the presented method allows us to calculate not only the density operator of the system at every instant of time but also the joint state of the system and reservoir. Note that the state of the reservoir and the joint state of the system and reservoir can be studied also using the Zwanzig projection operator method \cite{TrushZwan, TrushJCP}, the quantum stochastic limit method, and the resonance theory. Corrections to the quantum stochastic limit were studied in Ref.~\cite{PechenVol}.

When deriving quantum master equations, we will follow the method described in Chapter 9 of Bogoliubov's monograph \cite{Bogol46}. Let us explain the difference in the statement of the problem. Describing the dynamics of a gas of a large number of identical particles in terms of a single-particle state leads to nonlinear master equations: nonlinear terms are responsible for the interaction of particles (each of which is described by a single-particle state at a current instant of time) with each other. In the theory of open quantum systems, we deal with a single quantum system interacting with a large reservoir. This statement of the problem leads to linear master equations for the state of the system (see Refs.~\cite{Spohn,SpohnRev} for more details on this difference). Just as Bogoliubov's derivation of nonlinear master equations, our derivation of linear quantum master equations for an open quantum system is performed at the ``physical'' level of rigor.

\section{Problem statement}

Consider a quantum system interacting with a reservoir, and let $\mathcal H_S$ be a finite-dimensional Hilbert space corresponding to the system and $\mathcal H_R$ be an infinite-dimensional Hilbert space corresponding to the reservoir. Denote by $I_R$ the identity operator in the space $\mathcal H_R$. Then an arbitrary operator $A$ in the space of the system can be naturally embedded in the space $\mathcal H_S\otimes\mathcal H_R$ as $A\otimes I_R$; the operators in the space of the reservoir can also be embedded in $\mathcal H_S\otimes\mathcal H_R$ in a similar way. In the space $\mathcal H_S\otimes\mathcal H_R$, consider a Hamiltonian (self-adjoint operator) of the form

\begin{equation}
H=H_S+H_R+\lambda H_I=H_0+\lambda H_I,
\end{equation}
where $H_S$ is the Hamiltonian of the isolated system, $H_R$ is the Hamiltonian of the isolated reservoir, $H_0=H_S+H_R$ is the Hamiltonian of free evolution, $H_I$ is the Hamiltonian of interaction between the system and reservoir, and $\lambda$ is a small dimensionless parameter. Suppose that $H_I$ is a finite sum of the form

\begin{equation}
H_I=\sum_\alpha T_\alpha\otimes B_\alpha,
\end{equation}
where $T_\alpha$ are operators in the space of the system and $B_\alpha$ are operators in the space of the reservoir.

Define the corresponding Liouville operators acting on spaces of operators (i.e., superoperators):  $\mathcal L_0=[H_0,\,\cdot\,]$, $\mathcal L_I=[H_I,\,\cdot\,]$, and $\mathcal L_S=[H_S,\,\cdot\,]$, where the square brackets stand for the commutator of two operators. Introduce the unitary evolution superoperators $\mathcal U_0(t)O=e^{-iH_0t}Oe^{iH_0t}\equiv O(-t)$ and
$\mathcal U_S(t)O=e^{-iH_St}Oe^{iH_St}$, where $O$ is an arbitrary linear operator in $\mathcal H_S\otimes\mathcal H_R$. In addition, define
\begin{equation}
\mathcal L_I(t)=[H_I(t),\,\cdot\,]=\mathcal U_0(-t)\mathcal L\,\mathcal U_0(t).
\end{equation}

Denote by $\rho=\rho(t)$ the joint state of the system and reservoir at time $t$. Since the space of the reservoir is infinite-dimensional, not every state can be represented as a density operator: in the general case, a state is an identity-preserving linear functional on an algebra of observables \cite{OhyaVol}. As the algebra of observables, we take the algebra $\mathfrak A$ generated by operators of the form $T\otimes I_R$ and $T\otimes B_\alpha(t)$ for all possible operators $T$ in the space of the system, indices $\alpha$, and time instants $t\in\mathbb R$. Then, $\rho$ is a nonnegative linear functional on this algebra, its value on the identity of the algebra (on the identity operator) being equal to one. Nevertheless, in the notation, we treat $\rho$ as a density operator (in a ``generalized'' sense). For example, by an expression $\Tr\rho A$ we mean the value of the functional $\rho$ on an element $A$ of the algebra $\mathfrak A$.

Denote by $\rho_S(t)=\Tr_R\rho(t)$ the reduced density operator of the system, where $\Tr_R$ is the partial trace over the space $\mathcal H_R$; i.e., by definition, $\rho_S(t)$ is a density operator such that $\Tr\rho_S(t)T=\Tr[\rho(t)(T\otimes I_R)]$ for all operators $T$ in the space $\mathcal H_S$.

The state $\rho(t)$ satisfies the von Neumann equation

\begin{equation}\label{EqLiuv}
\dot\rho=-i\mathcal L_0\rho-i\lambda\mathcal L_I\rho;
\end{equation}
hence,
\begin{equation}\label{EqLiuvS}
\dot\rho_S=-i\mathcal L_S\rho_S-i\lambda\Tr_R(\mathcal L_I\rho).
\end{equation}

Let us state the problem of deriving a quantum master equation of the form

\begin{equation}\label{EqMaster}
\dot\rho_S=\mathcal G\rho_S,
\end{equation}
where $\mathcal G$ (from ``generator'') is a linear superoperator in the space of the system, i.e., an operator acting in the space of operators in $\mathcal H_S$.

Equation (\ref{EqLiuvS}) is not a closed equation for $\rho_S$. Hence, to obtain an equation of the form  (\ref{EqMaster}), we should consider a particular solution of equation (\ref{EqLiuv}) whose total state  $\rho$ at every time is completely determined by the reduced density operator of the system $\rho_S$:

\begin{equation}\label{EqR}
\rho(t)=\mathcal R\rho_S(t),
\end{equation}
where $\mathcal R$ (from ``recovery'') is a linear operator acting from the space of operators in the space of the system to the space of joint states of the system and reservoir. We can call it a recovery operator: the state of the system and reservoir is recovered from the reduced density operator of the system. It is also called the assignment map \cite{Pechukas,AlickiPechukas,Sudar,Guzik}.

Thus, $\rho(t)$ depends on time only through $\rho_S(t)$. Of course, this cannot be a general solution, because then the initial condition $\rho(0)$ would not be arbitrary but rather would be determined by the form of  $\rho_S(0)$. Nevertheless, one can expect from physical considerations (fast relaxation processes in the reservoir) that the state $\rho(t)$ rapidly approaches a state of the form (\ref{EqR}) under general physically admissible initial conditions. Assumption (\ref{EqR}) is an analog of the assumption in Bogoliubov's method that the multiparticle probability density depends on time only through the single-particle probability density. System-reservoir state of the class (\ref{EqR}) are called the kinetic states since only for such states a kinetic (master) equation can be formulated. The stage of evolution after relaxation of an initial state to a kinetic state, i.e., evolution of the states of form (\ref{EqR}), is called the kinetic stage of evolution. We stress that condition (\ref{EqR}) is required to hold only on the algebra $\mathfrak A$, i.e., on a fairly restricted set of observables.

The partial trace over the space of the reservoir is the operation inverse to $\mathcal R$; therefore, we require that the following equality should hold (consistency property):
\begin{equation}\label{EqTrRR}
\Tr_R\mathcal R\rho_S=\rho_S.
\end{equation}
Since the map $\mathcal R$ is linear, different from the product map (see Eq.~(\ref{EqR0}) below), and satisfies the consistency property (\ref{EqTrRR}), it cannot be positive unless we restrict its domain \cite{Pechukas,AlickiPechukas,Sudar,Guzik}. This could be expected: Not all reduced states of the system $\rho_S$ can be achieved after the aforementioned fast relaxation. We will consider $\mathcal R$ acting on all operators $\rho_S$, but we will refer to the subset of $\rho_S$ mapped to positive density operators as the positivity domain of $\mathcal R$ \cite{Sudar}.

Denote by $\Phi(t)$ the semigroup of operators acting in the space of linear operators on $\mathcal H_S$ that is generated by $\mathcal G$. If $\rho(0)=\mathcal R\rho_S(0)$, then 
\begin{equation}\label{EqSemiGr}
\rho_S(t)=\Phi(t)\rho_S(0).
\end{equation} 

If we substitute a formal solution of the von Neumann equation  (\ref{EqLiuv}),
$$\rho(t)=\mathcal U(t)\rho(0)=\mathcal U(t)\mathcal R\rho_S(0),$$
into the left-hand side of Eq.~(\ref{EqR})  and substitute a formal solution of the master equation (\ref{EqSemiGr}) into
the right-hand side, then, since $\rho_S(0)$ is arbitrary, we obtain
\begin{equation}\label{EqCommut}
\mathcal U(t)\mathcal R=\mathcal R\Phi(t).
\end{equation}

\section{Expansion in powers of the small parameters}

Thus, we address the problem of determining the expressions on the right-hand sides of Eqs.~(\ref{EqMaster}) and (\ref{EqR}) for which $\rho(t)$ satisfies Eq.~(\ref{EqLiuv}). To solve the problem, we employ expansions in the small parameter $\lambda$; namely, we will choose the expansion coefficients so that the expression

\begin{equation}\label{EqRSer}
\mathcal R\rho_S=\mathcal R_0\rho_S+\lambda\mathcal R_1\rho_S+\lambda^2\mathcal R_2\rho_S+\ldots
\end{equation}
with $\rho_S$ found from the equation
\begin{equation}\label{EqGser}
\dot\rho_S=\mathcal G_0\rho_S+\lambda \mathcal G_1\rho_S+\lambda^2 \mathcal G_2\rho_S+\ldots
\end{equation}
formally satisfies the von Neumann equation (\ref{EqLiuv}). Note that all $\mathcal R_r$ and $\mathcal G_r$ should be linear operators.

We expect that the truncated generator $\mathcal G_0+\ldots+\lambda^n\mathcal G_n$ is positive whenever $\lambda$ is small enough and $\rho_S$ belongs to the positivity domain of $\mathcal R$. The Redfield master equation is known to violate positivity on initial short times for some initial $\rho_S$. From the point of view of the presented formalism, positivity is violated on $\rho_S$ outside the positivity domain of $\mathcal R$. This agrees with the common view on possible initial violation of positivity by the Redfield equation \cite{Slip}: The system and reservoir adjust to each other's state on the initial fast time scale, on which the dynamics is highly non-Markovian and not described by the Redfield (or another Markovian) master equation. Generally, it is well-known that complete positivity can be violated in the case of initial system-reservoir correlations  \cite{Pechukas,Sudar,Lidar}.

It follows from Eq.~(\ref{EqTrRR}) that
\begin{equation}\label{EqTRRr}
\Tr_R\mathcal R_0\rho_S=\rho_S\qquad\text{and}\qquad
\Tr_R\mathcal R_r\rho_S=0,\quad r\geq1.
\end{equation}
Taking the partial trace over the reservoir in Eq.~(\ref{EqLiuv}), we immediately see that
\begin{equation}\label{EqG}
\mathcal G_0=-i\mathcal L_S\qquad\text{and}\qquad \mathcal G_r=-i\Tr_R[\mathcal L_I\mathcal R_{r-1}(\cdot)],\quad r\geq1.
\end{equation}
Next, differentiating equation (\ref{EqRSer}) and substituting expression (\ref{EqGser}) for $\dot\rho_S$, we obtain
\begin{equation}\label{EqDotRhoSSer}
\begin{split}
\mathcal R\dot\rho_S&=\mathcal R_0\mathcal G_0\rho_S+\lambda \mathcal R_0\mathcal G_1\rho_S+\lambda^2\mathcal R_0\mathcal G_2\rho_S+\ldots\\
&+\lambda \mathcal R_1\mathcal G_0\rho_S+
\lambda^2\mathcal R_1\mathcal G_1\rho_S+
\lambda^3\mathcal R_1\mathcal G_2\rho_S+\ldots\\
&+\lambda^2 \mathcal R_2\mathcal G_0\rho_S+
\lambda^3\mathcal R_2\mathcal G_1\rho_S+
\lambda^4\mathcal R_2\mathcal G_2\rho_S+\ldots\\&=
\sum_{r=0}^\infty\lambda^r\sum_{n=0}^r\mathcal R_{r-n}\mathcal G_n\rho_S.
\end{split}
\end{equation}
Substituting $\mathcal R\rho_S(t)$ for $\rho(t)$ in Eq.~(\ref{EqLiuv}), then replacing $\mathcal R\rho_S(t)$ and $\mathcal R\dot\rho_S(t)$ by their expansions (\ref{EqRSer})  and (\ref{EqDotRhoSSer}), respectively, and equating the coefficients of the same powers of $\lambda$, we obtain
\begin{equation}\label{EqGeneric}
\sum_{n=0}^r\mathcal R_{r-n}\mathcal G_n\rho_S=-i\mathcal L_0\mathcal R_r\rho_S-i\mathcal L_I\mathcal R_{r-1}\rho_S,\quad r\geq 0,
\end{equation}
where we set $\mathcal R_{-1}\equiv0$ by definition.

\section{Boundary conditions and recurrence formula}

To solve equations (\ref{EqGeneric}) for $\mathcal R_r$, one should specify the corresponding boundary conditions. To
this end, we proceed from the assumption that under free dynamics a joint state of the form (\ref{EqR}) relaxes to a state of the form $\mathcal U_S(t)\rho_S\otimes\rho_R^{\rm ref}$, where $\rho_R^{\rm ref}$ is a certain ``reference'' state of the reservoir RR
that is stationary with respect to its free dynamics: $e^{-iH_Rt}\rho_R^{\rm ref}e^{iH_Rt}=\rho_R^{\rm ref}$. For example, this may be a Gibbs state at some temperature. That is,
\begin{equation}
\mathcal U_0(\tau)\mathcal R\rho_S-
[\mathcal U_S(\tau)\rho_S]\otimes\rho_R^{\rm ref}\to0,\quad \tau\to+\infty,
\end{equation}
or
\begin{equation}\label{EqBoundCond}
\mathcal U_0(\tau)\mathcal R\mathcal U_S(-\tau)\rho_S\to
\rho_S\otimes\rho_R^{\rm ref},\quad \tau\to+\infty.
\end{equation}
These conditions should be satisfied on the algebra $\mathfrak A$. They are similar to the conditions of weakening of correlations between distant particles in the derivation of the Boltzmann equation by the Bogoliubov method.

For the terms of the series expansion, the boundary conditions have the form
\begin{equation}\label{EqBound}
\begin{split}
\mathcal U_0(\tau)\mathcal R_0\,
\mathcal U_S(-\tau)\rho_S&\to\rho_S\otimes\rho_R^{\rm ref},\\
\mathcal U_0(\tau)\mathcal R_r\,\mathcal U_S(-\tau)\rho_S&\to0,\quad r\geq 1,
\end{split}
\end{equation}
as $\tau\to+\infty$.

The following formula holds:
\begin{equation}\label{EqRrG0}
\mathcal R_r\mathcal G_0\mathcal U_S(\tau)\rho_S=
-i\mathcal R_r\mathcal L_S\mathcal U_S(\tau)\rho_S=
\frac{d}{d\tau}\mathcal R_r\mathcal U_S(\tau)\rho_S,
\end{equation}
where $\tau$ is an arbitrary real number. Here, the first equality holds in view of Eq.~(\ref{EqG}), and the second, in view of the fact that $-i\mathcal L_S$ is the generator of the unitary group $\mathcal U_S$.

Replacing $\rho_S$ in Eq.~(\ref{EqGeneric})  by $\mathcal U_S(\tau)\rho_S$ and applying formula (\ref{EqRrG0}), we obtain
\begin{equation}\label{EqGeneric2}
\frac{d}{d\tau}[\mathcal R_r\mathcal U_S(\tau)\rho_S]=
-i\mathcal L_0[\mathcal R_r\mathcal U_S(\tau)\rho_S]
-i\mathcal L_I\mathcal R_{r-1}\mathcal U_S(\tau)\rho_S
-\sum_{n=1}^r\mathcal R_{r-n}\mathcal G_n\mathcal U_S(\tau)\rho_S,\quad r\geq 0.
\end{equation}
We will regard this equality as an inhomogeneous equation for  $\mathcal R_r\mathcal U_S(\tau)\rho_S$. To stress this fact, we marked this operator in the equation with square brackets.

When $r=0$, the inhomogeneous part is absent and the equation takes the form

\begin{equation}
\frac{d}{d\tau}[\mathcal R_0\mathcal U_S(\tau)\rho_S]=
-i\mathcal L_0[\mathcal R_0\mathcal U_S(\tau)\rho_S];
\end{equation}
hence, 
\begin{equation}
\mathcal R_0\mathcal U_S(\tau)\rho_S=\mathcal U_0(\tau)\mathcal R_0\rho_S.
\end{equation}
Replacing the arbitrary argument $\rho_S$ by $\mathcal U_S(-\tau)\rho_S$, we obtain
\begin{equation}
\mathcal R_0\rho_S=\mathcal U_0(\tau)\mathcal R_0\mathcal U_S(-\tau)\rho_S.
\end{equation}
Since this equality holds for an arbitrary $\tau$ and its left-hand side is independent of $\tau$, we can pass
to the limit as $\tau\to+\infty$ on the right-hand side and, using Eq.~(\ref{EqBound}),  write

\begin{equation}\label{EqR0}
\mathcal R_0\rho_S=\rho_S\otimes\rho_R^{\rm ref}.
\end{equation}
So, in the limit $\lambda\to0$, the recovery operator maps an arbitrary $\rho_S$ into a product state with the reference state of the reservoir. This agrees with rigorous proofs that the product state is justified in the weak-coupling limit \cite{Tasaki,Yuasa,MerkliCorr}.

\begin{remark}\label{Rem}
We can see that  
$
\mathcal P_0\rho=\mathcal R_0(\Tr_R\rho)
$
is the standard projection operator that is conventionally used to derive a quantum master equation in the weak-coupling regime. Therefore, the expansion (\ref{EqRSer}) can be viewed as a generalization of the projection operator method. Namely, define the operator
\begin{equation}\label{EqP}
\mathcal P\rho=\mathcal R(\Tr_R\rho)=
\mathcal P_0\rho+\lambda \mathcal P_1\rho+\lambda^2\mathcal P_2\rho+\ldots,
\end{equation}
where $\mathcal P_r=\mathcal R_r(\Tr_R\rho)$. Conditions (\ref{EqTrRR}) and (\ref{EqTRRr}) guarantee that $\mathcal P$ is a projector, i.e., $\mathcal P^2=\mathcal P$. A state of the form (\ref{EqRSer}) onto which $\mathcal P$ projects provides a more accurate description of the state of the system and reservoir arising after fast relaxation than a state of the form (\ref{EqR0}). Let us explain this.

Under free dynamics, the state of the system and reservoir relaxes to Eq.~(\ref{EqR0}). However, the interaction constantly ``pushes out'' the state from the form  (\ref{EqR0}). Therefore, when the interaction is turned on, the state of the system and reservoir relaxes not to a state of the form (\ref{EqR0}) but rather to a state of the form (\ref{EqRSer}), which contains corrections to  Eq.~(\ref{EqR0}) with respect to the small parameter. This state corresponds to the ``equilibrium'' between the free relaxation dynamics in the reservoir, which brings the state to the form (\ref{EqR0}),  and the interaction that pushes it out from this form. 

If the initial state is not of the form  (\ref{EqRSer})  but of the form (\ref{EqR0}), then an initial non-Markovian period of the dynamics arises, which is described by nonperturbative terms and is noted and analyzed in the literature (see, for example, Refs.~\cite{Slip,Taepra}). It is this period that corresponds to the relaxation of a state to the form (\ref{EqRSer}). If the initial state has the form (\ref{EqRSer}), then no such terms arise and the dynamics is Markovian from the very beginning.

Which initial state is ``more correct,''  (\ref{EqR0}) or (\ref{EqRSer})? State  (\ref{EqR0}) corresponds to the situation when there has been no interaction (therefore, the system and reservoir have been in the state of absence of correlations), which is abruptly turned on at zero time, while  state (\ref{EqRSer}) corresponds to the situation when interaction has existed for a long time and we start to observe the system at some time instant.

The results that we are going to obtain can also be established by the projection operator method if we adopt projection operator (\ref{EqP}) with initially unknown components $\mathcal P_1,\mathcal P_2,\ldots$ and find these components by a recurrent procedure. It is interesting to compare our approach with R.~Balescu's approach to the derivation of nonlinear master equations of a system of interacting particles in classical statistical mechanics \cite{Balescu}. In this approach, it is required that the projection operator $\mathcal P$ should exactly satisfy the condition
\begin{equation}\label{EqPU}
\mathcal U(t)\mathcal P=\mathcal P\mathcal U(t).
\end{equation}
Then one constructs the projection operator $\mathcal P$ as a perturbation series satisfying this condition.
Let us show that our operator (\ref{EqP}) satisfies the weaker condition
\begin{equation}\label{EqPUP}
\mathcal U(t)\mathcal P=\mathcal P\mathcal U(t)\mathcal P;
\end{equation}
i.e., the subspace onto which $\mathcal P$ projects (following Balescu's book, we can call it a kinetic subspace) is invariant under evolution. The fulfillment of condition (\ref{EqPUP}) follows from the fact that a state of the form  (\ref{EqR}) is a solution of the Liouville equation; i.e., if the initial state has the form $\mathcal R\rho_S(0)$ (belongs to the kinetic subspace), then the state at time $t$, i.e., $\mathcal R\rho_S(t)$, also belongs to the kinetic subspace. We can also present a formal chain of equalities for an arbitrary state $\rho$: 
\begin{equation*}
\begin{split}
\mathcal P\mathcal U(t)\mathcal P\rho&=
\mathcal R\Tr_R\mathcal U(t)\mathcal R\Tr_R\rho\\&=
\mathcal R\Tr_R\mathcal R\Phi(t)\Tr_R\rho\\&=
\mathcal R\Phi(t)\Tr_R\rho\\&=
\mathcal U(t)\mathcal R\Tr_R\rho\\&=
\mathcal U(t)\mathcal P\rho,
\end{split}
\end{equation*}
where we have used equalities (\ref{EqCommut}) and (\ref{EqTrRR}). To achieve the stronger equality  (\ref{EqPU}), it is necessary and sufficient that we have the additional equality $\mathcal P\mathcal U(t)=\mathcal P\mathcal U(t)\mathcal P$, or, equivalently, $\mathcal P\mathcal U(t)\mathcal Q=0$, where $\mathcal Q$ is the projector onto the orthogonal complement of the kinetic subspace (nonkinetic subspace). This equality means that even if the initial state contains a nonkinetic part, it does not affect the dynamics of the kinetic part. In view of the above-mentioned results of Ref.~\cite{Taepra} on the emergence of nonperturbative terms, this assertion fails to hold: the nonkinetic part (for example, if the initial state has the form (\ref{EqR0}), which, as said above, contains a nonkinetic part when terms of order higher than zero are taken into account) affects the dynamics of the kinetic part.
\end{remark}

Let us return to the solution of Eq.~(\ref{EqGeneric2}) for $r\geq1$. If expressions for $\mathcal R_0,\mathcal R_1,\ldots,\mathcal R_{r-1}$ are already known, then we also know the inhomogeneous part. Therefore, the solution can be written as
\begin{equation}
\mathcal R_r\mathcal U_S(\tau)\rho_S=\mathcal U_0(\tau)\mathcal R_r\rho_S
-\int_0^\tau ds\,\mathcal U_0(s)\Big\lbrace
i\mathcal L_I\mathcal R_{r-1}+
\sum_{n=1}^r\mathcal R_{r-n}\mathcal G_n\Big\rbrace
\mathcal U_S(\tau-s)\rho_S.
\end{equation}
Replacing the arbitrary argument $\rho_S$ by $\mathcal U_S(-\tau)\rho_S$, we obtain
\begin{equation}
\mathcal R_r\rho_S=\mathcal U_0(\tau)\mathcal R_r\mathcal U_S(-\tau)\rho_S
-\int_0^\tau ds\,\mathcal U_0(s)\Big\lbrace
i\mathcal L_I\mathcal R_{r-1}+
\sum_{n=1}^r\mathcal R_{r-n}\mathcal G_n\Big\rbrace
\mathcal U_S(-s)\rho_S.
\end{equation}
Again passing to the limit as $\tau\to+\infty$ and applying the boundary conditions (\ref{EqBound}), we find
\begin{equation}\label{EqRecur}
\begin{split}
\mathcal R_r\rho_S&=
-\int_0^\infty ds\,\mathcal U_0(s)\Big\lbrace
i\mathcal L_I\mathcal R_{r-1}+
\sum_{n=1}^r\mathcal R_{r-n}\mathcal G_n\Big\rbrace
\mathcal U_S(-s)\rho_S\\
&=-\int_0^\infty ds\,\Big\lbrace
i\mathcal L_I(-s)\mathcal R_{r-1}(-s)+
\sum_{n=1}^r\mathcal R_{r-n}(-s)\mathcal G_n(-s)\Big\rbrace
\rho_S,
\end{split}
\end{equation}
where $\mathcal R_r(t)=\mathcal U_0(-t)\mathcal R_r\,\mathcal U_S(t)$ and $\mathcal G(t)=\mathcal U_S(-t)\mathcal G\,\mathcal U_S(t)$.

Let us check that equalities  (\ref{EqTRRr}) hold for $r\geq1$. For $r=1$, the equality is straightforward in
view of formula (\ref{EqG}) for $\mathcal G_r$. For $r>1$, one establishes Eq.~(\ref{EqTRRr}) by induction, also using Eq.~(\ref{EqG}).

Formula (\ref{EqRecur}) is a recurrence formula that allows us to successively determine all $\mathcal R_r$ and,
hence, all $\mathcal G_r$:
\begin{equation}\label{EqGr}
\mathcal G_r\rho_S=-\int_0^\infty ds\,
\Tr_R\Big\lbrace
\mathcal L_I\mathcal L_I(-s)\mathcal R_{r-2}(-s)\rho_S
-i
\sum_{n=1}^{r-1}\mathcal L_I\mathcal R_{r-n-1}(-s)\mathcal G_n(-s)\rho_S\Big\rbrace.
\end{equation}
In the interaction representation, the generator takes the form
\begin{equation}\label{EqGrInt}
\mathcal G_r\rho^I_S=-\int_0^\infty ds\,
\Tr_R\Big\lbrace
\mathcal L_I(t)\mathcal L_I(t-s)\mathcal R_{r-2}(t-s)\rho^I_S
-i
\sum_{n=1}^{r-1}\mathcal L_I(t) \mathcal R_{r-n-1}(t-s)
\mathcal G_n(t-s)\rho^I_S\Big\rbrace,
\end{equation}
where $\rho_S^I(t)=\mathcal U_S(-t)\rho_S(t)$ is the reduced density operator of the system in the interaction representation.

Note that in the interaction representation the differential equation for the reduced density operator of the system, $\rho_S^I(t)$, is no longer autonomous. However, we can recover the autonomous differential equation by applying the secular approximation \cite{BP,Huelga}, which consists in eliminating the rapidly oscillating terms from Eq.~(\ref{EqGrInt}), or a more general approximation \cite{TrushUni}.

As mentioned above, if $\rho(0)\neq\mathcal R\rho_S(0)$, then the generator $\mathcal G$ allows us to calculate the dynamics on times longer than the short initial time of relaxation of the joint state of the system and reservoir to the form (\ref{EqRSer}). The question arises as to how to calculate the dynamics in this short initial stage. Even if we are not interested in the details of the transient dynamics itself, we need to know the state of the system to which this dynamics leads, because it is this state that should serve as the initial condition for the master equation (\ref{EqMaster}). The problem of correct choice of the initial states for quantum master equations was studied in Ref.~\cite{Taepra}. One of the possible variants is to take the first several terms of the chronological exponential
\begin{equation}\label{EqTexp}
\begin{split}
\rho(t_0)&=T_+\exp\left\{-i\int_0^{t_0}ds\,\mathcal L_I(s)\right\}\rho(0)
\\
&\equiv
\rho(0)-i\int_0^{t_0}ds\,\mathcal L_I(s)\rho(0)+
(-i)^2\int_0^{t_0}ds_1\int_0^{s_1}ds_2\,\mathcal L_I(s_1)\mathcal L(s_2)\rho(0)
+\ldots
\end{split}
\end{equation}
and a time  $t_0$ such that the approximation in the form of several terms of the series (\ref{EqTexp}) is yet adequate, while the relaxation to a state of the form (\ref{EqRSer}) has already occurred. Then the calculated value of $\rho(t_0)$ can be taken as the initial value for the master equation (\ref{EqMaster}).

\section{Well-definiteness of the expressions for $\mathcal R_r$ and $\mathcal G_r$}

Expressions (\ref{EqRecur}) and (\ref{EqGr}) have been derived at the ``physical'' level of rigor; however, we will
show that under certain conditions these expressions are well defined. Throughout the rest of the
paper, we assume that the reference state of the reservoir $\rho_R^{\rm ref}$ and operators $B_\alpha$ appearing in $H_I$ have the following properties:
\begin{align}
&\langle
B_{\alpha_1}(t_1)\cdots
B_{\alpha_{2n+1}}(t_{2n+1})
\rangle_R=0,\label{EqOddZero}\\
&\langle
B_{\alpha_1}(t_{1})\cdots B_{\alpha_{2n}}(t_{2n})
\rangle_R=
\sum 
\prod_{j=1}^n
\langle
B(t_{j_1})B(t_{j_2})
\rangle_R\label{EqEven}
\end{align}
for any indices $\alpha_j$ and time instants $t_j$. Here $\langle A\rangle_R=\Tr_R A\rho_R$ for any $A\in\mathfrak A$. The sum in Eq.~(\ref{EqEven}) is taken over all partitions of the set $\{1,\ldots,2n\}$ into $n$ pairs  $\{(j_1,j_2)\}$ such that $j_1<j_2$. Then, in view of the equality
$$\langle B_\alpha(t)B_\beta(s)\rangle_R=
\langle B_\alpha(t-s)B_\beta\rangle_R,$$
the generators of all orders can be expressed in terms of pair correlation functions
\begin{equation}\label{EqC}
C_{\alpha\beta}(t)=\langle B_\alpha(t)B_\beta\rangle_R=
\chi_{\alpha\beta}(t)-i\varphi_{\alpha\beta}(t),
\end{equation}
where $\chi_{\alpha\beta}(t)$ and $\varphi_{\alpha\beta}(t)$ are real functions. Note that we have the equality $C_{\alpha\beta}(-t)=C_{\beta\alpha}^*(t)$.

\begin{theorem}
Suppose that conditions (\ref{EqOddZero}) and (\ref{EqEven}) are satisfied and the correlation functions $C_{\alpha\beta}(t)$ decay rapidly, i.e., $t^nC_{\alpha\beta}(t)\to0$ as $t\to\infty$ for all $n$. Then expressions (\ref{EqRecur}) and (\ref{EqGr}) are well defined for all $r$.
\end{theorem}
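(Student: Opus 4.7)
I would proceed by induction on $r$, the inductive hypothesis being that $\mathcal{R}_0,\dots,\mathcal{R}_{r-1}$ and $\mathcal{G}_0,\dots,\mathcal{G}_{r-1}$ are well defined, and moreover that for any system operator $\sigma$ and any $A=T\otimes B_{\beta_1}(t_1)\cdots B_{\beta_k}(t_k)\in\mathfrak{A}$ the functional value $(\mathcal{R}_k\sigma)(A)$ is a finite sum of iterated time integrals whose integrands factor, via Wick's theorem (\ref{EqOddZero})--(\ref{EqEven}), into products of pair correlation functions $C_{\alpha\beta}$ multiplied by operator-valued coefficients polynomial in the $T_\alpha$'s and in $\sigma$. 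The base case is immediate: $\mathcal{R}_0\rho_S=\rho_S\otimes\rho_R^{\rm ref}$ from (\ref{EqR0}) involves no integral, $\mathcal{G}_0=-i\mathcal{L}_S$, and $\mathcal{G}_1=0$ by (\ref{EqOddZero}). For the inductive step I would substitute the hypothesized expansion of $\mathcal{R}_{r-1}$ into (\ref{EqRecur}), apply $\Tr_R(\cdot\,\rho_R^{\rm ref})$ after dualizing $\mathcal{L}_I$ onto the observable $A$, and use (\ref{EqEven}) to decompose the resulting multi-point reservoir correlator into a sum of products of the $C_{\alpha\beta}$'s; this preserves the structural form at order $r$.

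The main obstacle, and really the content of the theorem, is absolute convergence of the outer integral $\int_0^\infty ds$ in (\ref{EqRecur}). A direct Wick expansion of $\mathcal{L}_I(-s)\mathcal{R}_{r-1}(-s)\rho_S$ produces, besides the \emph{connected} contractions in which the freshly introduced reservoir factor $B_\alpha(-s)$ pairs with some $B_\beta(t_j)$ from $A$ or with an integration variable already present in $\mathcal{R}_{r-1}$, also \emph{disconnected} contractions in which the $B_\alpha(-s)$ and its Wick partner form a cluster completely detached from both $A$ and the remaining integration variables. In such disconnected pieces the $s$-dependence is merely $C_{\alpha\beta}(0)$ times a Wick remainder independent of $s$, so the outer integrand fails to decay and the $s$-integral diverges. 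The role of the subtraction term $\sum_{n=1}^r\mathcal{R}_{r-n}(-s)\mathcal{G}_n(-s)\rho_S$ in (\ref{EqRecur}) is precisely to cancel exactly these disconnected contributions: by the inductive form of $\mathcal{G}_n$ obtained from (\ref{EqGr}), the operator $\mathcal{G}_n(-s)\rho_S$ is the closed form of the disconnected Wick cluster of depth $n$ anchored at time $-s$, and its reinsertion through $\mathcal{R}_{r-n}(-s)$ reproduces every disconnected configuration of the full $r$-th order expression. Verifying this combinatorial cancellation term by term is the delicate part of the argument, entirely parallel to the connected-cluster cancellations in Bogoliubov's derivation of the Boltzmann equation.

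After the cancellation, only connected Wick diagrams survive, and in every such diagram the outer variable $s$ appears in at least one factor $C_{\alpha\beta}(s-s')$ with $s'$ either a time argument of $A$ or an earlier integration variable. Invoking the rapid-decay hypothesis $t^n C_{\alpha\beta}(t)\to 0$, each such factor is bounded by an integrable function of $s$, while the inductive hypothesis provides analogous bounds on all inner integrations. A routine application of Fubini's theorem then yields absolute convergence of every nested integral, so both $\mathcal{R}_r\rho_S$ (as a functional on $\mathfrak{A}$) and $\mathcal{G}_r\rho_S$ (as a bounded operator on $\mathcal{H}_S$, since $\dim\mathcal{H}_S<\infty$) are well defined, closing the induction. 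The hard part, to reiterate, is isolating and then matching the disconnected-cluster counterterms; once this is done the convergence estimate reduces to a product of one-dimensional bounds of the form $\int_0^\infty|C_{\alpha\beta}(s-s')|\,ds<\infty$.
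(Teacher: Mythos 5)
Your plan follows the same route as the paper's proof: induction on $r$, a structural claim that $\mathcal R_r\rho_S$ is a sum of nested time integrals acting on $\rho_S\otimes\rho_R^{\rm ref}$, Wick factorization of the reservoir correlators via (\ref{EqOddZero})--(\ref{EqEven}), and convergence from the rapid decay of $C_{\alpha\beta}$. The difference is one of emphasis: the paper's proof simply asserts that the value of $\mathcal R_r\rho_S$ on an element of $\mathfrak A$ is ``the integral of the product of rapidly decreasing functions of the variables $s_1,\ldots,s_k$,'' whereas you correctly observe that this holds only after the disconnected Wick pairings --- those that omit some integration variable from every correlation-function argument and hence produce divergent integrals --- have been cancelled by the $\sum_{n}\mathcal R_{r-n}\mathcal G_n$ counterterms in (\ref{EqRecur}). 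This cancellation is real and can be checked explicitly at fourth order: in (\ref{EqG4}) the pairing of the first with the second and the third with the fourth Liouvillian yields $C(s_3)C(s_1)$, with no decay in $s_2$, and is removed exactly by the first subtracted product of two-point functions. (One small correction: the dangerous disconnected contraction of $B_\alpha(-s)$ with a partner at time $-s-u$ inside $\mathcal R_{r-1}(-s)$ gives $C_{\alpha\beta}(u)$, not $C_{\alpha\beta}(0)$; what matters is that it is independent of $s$, not that it is evaluated at zero.) So your proposal is, if anything, a more careful rendering of the paper's argument; note, however, that like the paper you only announce the combinatorial cancellation of disconnected clusters rather than proving it, so the ``delicate part'' you flag is left unverified in both texts.
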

\begin{proof}
It suffices to prove that $\mathcal R_r\rho_S$ are well-defined linear functionals on the algebra $\mathfrak A$, since the fact that $\mathcal G_r\rho_S$ are well defined follows from here in view of Eq.~(\ref{EqG}).

We can easily prove by induction that $\mathcal R_r\rho_S$ is a sum of terms of the form
\begin{equation}
\int_0^\infty ds_1
\ldots
\int_0^\infty ds_k
\,
\mathcal L(-s_1)
\mathcal L(-s_1-s_2)
\cdots
\mathcal L(-s_1-\ldots-s_k)
\mathcal T(s_1,\ldots,s_k)\rho_S\otimes\rho_R^{\rm ref},
\end{equation}
where $\mathcal T(s_1,\ldots,s_k)$  is a superoperator in the space of the system that depends on $s_1,\ldots,s_k$ (product of different $\mathcal G_r$). Hence, using Eqs.~(\ref{EqOddZero}) and (\ref{EqEven}), we can easily see that the value of the functional $\mathcal R_r\rho_S$ on an arbitrary element of the algebra $\mathfrak A$ is the integral of the product of rapidly decreasing functions of the variables $s_1,\ldots,s_k$, which is convergent.
\end{proof}

\section{Redfield equation and the first correction to it}

Let us obtain the first two nonzero terms of the expansion of the generator (\ref{EqGser}) in an explicit form. By  Eqs.~(\ref{EqG})   and (\ref{EqOddZero}), we conclude that $\mathcal G_1\equiv 0$. Let us derive an expression for $\mathcal G_2$. Applying equation (\ref{EqRecur}) for $r=1$, we get
\begin{equation}\label{EqR1}
\mathcal R_1\rho_S=
-i\int_0^\infty ds\,\mathcal L_I(-s)(\rho_S\otimes\rho_R^{\rm ref}).
\end{equation}

Hence, 
\begin{equation}
\mathcal G_2\rho_S=-\int_0^\infty ds\,\Tr_R\left[\mathcal L_I\mathcal L_I(-s)(\rho_S\otimes\rho_R^{\rm ref})\right].
\end{equation}
This is the standard Redfield generator. In the interaction representation, the generator takes the form
\begin{equation}\label{EqG2int}
\mathcal G_2(\rho^I_S)=-\int_0^\infty ds\,\Tr_R\left[\mathcal L_I(t)\mathcal L_I(t-s)(\rho_S^I\otimes\rho_R^{\rm ref})\right].
\end{equation}
It is well known that this generator turns into a generator of the Gorini--Kossakowski--Lindblad--Sudarshan (GKLS)\footnote{We follow the order of the names proposed in Ref.~\cite{BriefGKLS}.} form if one applies the secular approximation  or a more general approximation.

Applying equation (\ref{EqRecur}) for $r=2$, we obtain
\begin{multline}\label{EqR2}
\mathcal R_2\rho_S=-\int_0^\infty ds_2\int_0^\infty ds_1
\Big\lbrace
\mathcal L_I(-s_2)\mathcal L_I(-s_1-s_2)(\rho_S\otimes\rho_R^{\rm ref})\\-
\Tr_R
\big[
\mathcal L_I(-s_2)\mathcal L_I(-s_1-s_2)(\rho_S\otimes\rho_R^{\rm ref})
\big]\otimes\rho_R^{\rm ref}
\Big\rbrace.
\end{multline}

In view of  Eq.~(\ref{EqOddZero}), we have $\mathcal G_3\equiv0$. To find $\mathcal G_4$, we need to get an expression for $\mathcal R_3\rho_S$. Applying
equation (\ref{EqRecur}) for $r=3$, we have
\begin{multline}\label{EqR3}
\mathcal R_3\rho_S=
-i\int_0^\infty ds_3\int_0^\infty ds_2\int_0^\infty ds_1\\\Big\{
\mathcal L_I(-s_1-s_3)
\Tr_R\big[\mathcal L_I(-s_3)\mathcal L_I(-s_2-s_3)
(\rho_S\otimes\rho_R^{\rm ref})
\big]\otimes\rho_R^{\rm ref}\\-
\mathcal L_I(-s_3)\mathcal L_I(-s_2-s_3)\mathcal L_I(-s_1-s_2-s_3)
(\rho_S\otimes\rho_R^{\rm ref})\\+
\mathcal L_I(-s_3)
\Tr_R\big[\mathcal L_I(-s_2-s_3)\mathcal L_I(-s_1-s_2-s_3)
(\rho_S\otimes\rho_R^{\rm ref})\big]\otimes\rho_R^{\rm ref}
\Big\}
\end{multline}
and
\begin{equation}\label{EqG4}
\begin{split}
\mathcal G_4\rho_S=
\int_0^\infty ds_3\int_0^\infty ds_2\int_0^\infty ds_1
\Big\{
&\langle\mathcal L_I\mathcal L_I(-s_3)\mathcal L_I(-s_2-s_3)\mathcal L_I(-s_1-s_2-s_3)
\rangle_R
\\-
&\langle\mathcal L_I\mathcal L_I(-s_3)\rangle_R
\langle
\mathcal L_I(-s_2-s_3)\mathcal L_I(-s_1-s_2-s_3)\rangle_R
\\-
&\langle\mathcal L_I\mathcal L_I(-s_1-s_3)\rangle_R
\langle\mathcal L_I(-s_3)\mathcal L_I(-s_2-s_3)\rangle_R
\Big\}.
\end{split}
\end{equation}
In the interaction representation, the generator takes the form
\begin{equation}\label{EqG4int}
\begin{split}
\mathcal G_4(\rho^I_S)=
\int_0^\infty ds_3\int_0^\infty ds_2\int_0^\infty ds_1
\Big\{
&\langle\mathcal L_I(t)\mathcal L_I(t-s_3)\mathcal L_I(t-s_2-s_3)
\mathcal L_I(t-s_1-s_2-s_3)
\rangle_R
\\-
&\langle\mathcal L_I(t)\mathcal L_I(t-s_3)\rangle_R
\langle
\mathcal L_I(t-s_2-s_3)\mathcal L_I(t-s_1-s_2-s_3)\rangle_R
\\-
&\langle\mathcal L_I(t)\mathcal L_I(t-s_1-s_3)\rangle_R
\langle\mathcal L_I(t-s_3)\mathcal L_I(t-s_2-s_3)\rangle_R
\Big\}.
\end{split}
\end{equation}

Let us calculate the averages over the reservoir in the expressions for $\mathcal G_2$ and $\mathcal G_4$ in the interaction representation in the case when $H_I=T\otimes B$ (i.e., when $H_I$ contains a single term). In this case, we can omit the subscripts of the correlation function $C(t)$ and its components $\chi(t)$ and $\varphi(t)$ (see Eq.~(\ref{EqC})).
Introduce the superoperators
\begin{equation*}
\mathcal C(t)=[T(t),\,\cdot\,],\quad
\mathcal D(s,t)=\chi(s)[T(t-s),\,\cdot\,]-i\varphi(s)\{T(t-s),\,\cdot\,\},
\end{equation*}
where $\{\cdot,\cdot\}$ is the anticommutator. Then a direct calculation shows that
\begin{equation}
\mathcal G_2\rho_S=
-\int_0^\infty ds\,\mathcal C(s)\mathcal D(s,t)\rho_S
\end{equation}
and
\begin{multline}\label{EqG4intCD}
\mathcal G_4\rho_S=
\int_0^\infty ds_1
\int_0^\infty ds_2
\int_0^\infty ds_3\\
\big\lbrace
\mathcal C(s)\mathcal C(t-s_3)
\big[
\mathcal D(s_2+s_3,t)
\mathcal D(s_1+s_2,t-s_3)
+
\mathcal D(s_2,t-s_3)
\mathcal D(s_1+s_2+s_3,t)
\big]
\\
-
\mathcal C(t)\mathcal D(s_1+s_3,t)
\mathcal C(t-s_3)\mathcal D(s_2,t-s_3)
\big\rbrace\rho_S.
\end{multline}

Expressions (\ref{EqG4int}) and (\ref{EqG4intCD}) for the correction $\mathcal G_4$ to the Redfield generator have a simpler form than the corresponding expressions in Ref.~\cite{4thorder} (see also Ref.~\cite{TrushHigh}), in which the initial transient dynamics from a product system-reservoir state (\ref{EqR0}) to a kinetic state is taken into account. As we said above, in our approach one can consider this dynamics separately, after which one can apply a simpler master equation. If the system and reservoir interacted before and arrive at the initial time instant in a kinetic state $\mathcal R\rho_S(0)$, then there is no initial transient dynamics and the obtained expressions (\ref{EqG4int}) and (\ref{EqG4intCD}) can be used from the very beginning.

Note that the generator of the $2n$th order provides the accuracy of a steady-state solution $\rho_S^*$ up to the $(2n-2)$th order for the part of $\rho_S^*$ commuting with $H_S$ (i.e., the diagonal part) and up to the $2n$th order for the part of $\rho_S^*$ non-commuting with $H_S$ \cite{FlemingCummings,Purkayastha}.

\section{Is the quantum dynamics Markovian?}

The fact that the reduced density operator of the system satisfies the closed ordinary differential equation (\ref{EqMaster}) could be associated with Markovianity: to predict the further dynamics of the system, it suffices to know the density operator of the system at a given instant of time, which intuitively implies the absence of memory. There is no unanimity in the literature regarding the generalization of the concept of Markovianity to the quantum case. Various approaches to the concept of Markovian quantum dynamics were discussed in Ref.~\cite{HierMark}. We will use to a definition that is equivalent to the definition of a classical Markovian random process in the classical case. It requires the knowledge of not only the reduced density operator of the system but also the correlation functions \cite{SinPetr}. Let $A_1,\ldots,A_n$ be operators in the space of the system. Denote by $\hat A_j(t)=\mathcal U_0(-t)(A\otimes I_R)$ the corresponding operators at time $t$ in the Heisenberg picture (hatted symbols are used for the Heisenberg picture, while the corresponding unhatted symbols are used for the interaction representation). Consider the correlation function
\begin{equation}\label{EqCorr}
\begin{split}
\langle
\hat A_n(t_n)\cdots\hat A_1(t_1)
\rangle
&=
\Tr\hat A_n(t_n)\cdots\hat A_1(t_1)\mathcal R\rho_S(0)
\\&=
\Tr A_n\mathcal U(t_{n}-t_{n-1}) A_{n-1}\cdots
A_1\mathcal U(t_1)\mathcal R\rho_S(0)\\
&=
\Tr A_n\mathcal U(t_{n}-t_{n-1}) A_{n-1}\cdots
\mathcal U(t_2-t_1)A_1\mathcal R\Phi(t_1)\rho_S(0),
\end{split}
\end{equation}
where we used equality (\ref{EqCommut}) as well as the convention that the superoperator (just as the operator of taking the trace) acts on the whole expression to the right of it. We say that the quantum process defined in our case by the semigroup generator $\mathcal G$ and the recovery operator $\mathcal R$ is Markovian if the correlation function can be expressed in terms of the semigroup $\Phi$:
\begin{equation}\label{EqCorrMark}
\langle
\hat A_n(t_n)\cdots\hat A_1(t_1)
\rangle
=
\Tr A_n\Phi(t_{n}-t_{n-1}) A_{n-1}\cdots
\Phi(t_2-t_1)A_1\Phi(t_1)\rho_S(0).
\end{equation}
The corresponding statement for pair correlators ($n=2$) is called the quantum regression theorem \cite{Lax,DumkeCorr}.

If we restrict ourselves to the zeroth order of the recovery operator in Eq.~(\ref{EqCorr}), i.e., if we replace $\mathcal R$ by $\mathcal R_0$ (which, as we have seen, corresponds to a second-order generator), then the validity of Eq.~(\ref{EqCorrMark})  can be easily proved due to the equality 
\begin{equation}\label{EqR0prod}
A\mathcal R_0 B=\mathcal R_0(AB)=AB\otimes\rho_R^{\rm ref}
\end{equation}
for any operators $A$ and $B$ in the space $\mathcal H_S$. Successively applying equalities (\ref{EqCorr}) and (\ref{EqR0prod}) in Eq.~(\ref{EqCommut}), we arrive at Eq.~(\ref{EqCorrMark}).
 
However, equality (\ref{EqR0prod}) is valid only for the zeroth order of the recovery operator. It fails in the general case. It is easy to see from Eq.~(\ref{EqR1}) that this equality fails to hold even for $\mathcal R_1$; moreover, in the general case,  $A\mathcal R_1B$ does not have the form $\mathcal RC$ for any operator $C$. Therefore, $\mathcal U(t_2-t_1)A_1\mathcal R\Phi(t_1)\rho_S(0)$ cannot be in general described by the semigroup $\Phi$: the argument of the superoperator $\mathcal U(t_2-t_1)$ does not have the form $\mathcal R C$. Thus, the quantum dynamics is Markovian only in the zero-order approximation of the recovery operator and in the second-order approximation of the generator. Since higher order corrections to the generator require nonzero terms in the expansion of the operator $\mathcal R$, we can conclude that the dynamics of the system in the regime of weak coupling to the reservoir, which is described by the generator with higher order corrections, is non-Markovian in the indicated sense.

\section{Discussion and conclusions}

We have demonstrated that the Bogoliubov method allows one to obtain quantum master equations for open quantum systems in an arbitrary order of perturbation theory with respect to a small parameter; in the present paper we took the coupling constant between the system and reservoir as such a parameter. We have proved that under certain assumptions the expansion terms of all orders for the generator are well defined. 

The method naturally considers initially correlated system-reservoir states arising from the previous system-reservoir dynamics.  Namely, the recovery operator $\mathcal R$ (\ref{EqR}) recovers a correlated system-reservoir state for a given reduced state of the system. A product (uncorrelated) state is just the zeroth-order approximation, see Eqs.~(\ref{EqRSer}) and (\ref{EqR0}), while corrections contain correlations, see Eqs.~(\ref{EqR1}), (\ref{EqR2}), and (\ref{EqR3}) for the first-, second-, and the third-order corrections, respectively. 

Let us mention some open questions. As we said above, the Redfield generator $\mathcal G_2$ can be transformed into a generator of the GKLS form by means of the secular or a more general approximation. It would be interesting to find out whether the corrections (\ref{EqGr}) and (\ref{EqGrInt}) to the Redfield equation can be reduced to the GKLS form by similar approximations. In  Ref.~\cite{Taepra}, it is shown that, for a particular exactly solvable model with Hamiltonian expressed in the rotating wave approximation, both the Redfield equation and all corrections to it have the GKLS form. From the other side, from Result~3 of Ref.~\cite{MerkliRev}, it might be conjectured that, in general, it is impossible to increase the accuracy of the steady-state (with respect to the zeroth order) maintaining the same or greater accuracy for the coherences (the part of the system density operator non-commuting with $H_S$) on intermediate times with a time-independent GKLS generator.

The main open problem is to rigorously substantiate the derivation presented above, especially assumptions (\ref{EqR}) and (\ref{EqBoundCond}). More precisely, we can speak of two aspects of this problem. The first is to rigorously substantiate the existence of such solutions, i.e., to prove that a solution of the Cauchy problem for the Liouville equation with the initial condition of the form (\ref{EqR}) has the form constructed here. The second aspect is to prove that even if the initial state does not have the form (\ref{EqR}) (which, as we said, is not rare: even the most frequently used factorized state (\ref{EqR0}) does not have the form  (\ref{EqR}) if we take into account orders higher than zero) but belongs to some wider class of ``physically interesting'' initial states, then, in a short time, it approaches a state of the form (\ref{EqR}). This would imply that the solutions constructed are in a sense exhaustive for physically interesting initial states. One can also check the validity of assumptions (\ref{EqR}) and (\ref{EqBoundCond}) for exactly solvable models, for example, for models solvable by the pseudomode method \cite{Taepra,Garraway1,Garraway2,Taetr,TaeLob}.

Finally, it would be interesting to consider the so called mean force Gibbs state from the viewpoint of the presented approach. Namely, consider a thermal reservoir with the inverse temperature $\beta$ and the system-reservoir Gibbs state $\rho_{SR,\beta}=Z^{-1}e^{-\beta H}$, where $Z=\Tr e^{-\beta H}$ (strictly speaking, $Z$ is ill-defined and $\rho_{SR,\beta}$ is not a density operator but a positive functional on the algebra of observables). It is stationary for the system-reservoir dynamics. The corresponding reduced state of the system $\rho_{S,\beta}=\Tr_R[Z^{-1}e^{-\beta H}]$ is called the mean force Gibbs state \cite{CresserAnders}. It appears that
\begin{equation}\label{EqMFG}
\mathcal R\rho_{S,\beta}=\rho_{SR,\beta}.
\end{equation}
Indeed, from Eq.~(\ref{EqCommut}), we see that $\mathcal R\rho_S$ is stationary for the exact system-reservoir dynamics whenever $\rho_S$ is stationary for $\Phi(t)$. So, if the joint system-reservoir state relaxes to the Gibbs state $\rho_{SR,\beta}$ (which is true under certain conditions \cite{Bach,Frohlich}), then $\Phi(t)$ has a unique stationary state $\rho_S^*$ and $\mathcal R\rho_{S}^*=\rho_{SR,\beta}$. Moreover, $\rho_S^*$ coincides with $\rho_{S,\beta}$ in view of Eq.~(\ref{EqTrRR}). 

Note that the mean force Gibbs state is not stationary for the map
\begin{equation}\label{EqProdEvol}
\rho_S\mapsto 
\Tr_R[e^{-iHt}(\rho_S\otimes\rho_R^{\rm ref})e^{iHt}]
\end{equation}
because the product state destroys the system-bath correlations, but is stationary for the map
\begin{equation}\label{EqRevol}
\rho_S\mapsto \Tr_R[e^{-iHt}(\mathcal R\rho_S)e^{iHt}]\equiv \Phi(t)\rho_S.
\end{equation}

\textbf{Acknowledgments.} I am grateful to Alexander Teretenkov for important remarks and references. I would like to dedicate this paper to the 75th anniversary of my teacher Prof. Igor Vasil'evich Volovich. This work is supported by the Russian Science Foundation under grant 17-71-20154.

\end{document}